\documentclass[12pt]{article}
\usepackage{a4,latexsym, amssymb}



\newtheorem{Theorem}{Theorem}
\newtheorem{Lemma}{Lemma}
\newtheorem{Corollary}{Corollary}

\newtheorem{Example}{Example}
\newtheorem{Remark}{Remark}

\begin{document}

\title{The Minimal Polynomial over $\mathbb{F}_q$ of
Linear Recurring Sequence over $\mathbb{F}_{q^m}$
\thanks{This research is supported in part by the National Natural Science Foundation of
China under the Grant 60872025. }}

\author{
Zhi-Han Gao\thanks{Z.-H. Gao is with the Chern Institute of
Mathematics, Nankai University, Tianjin 300071, P.R. China. E-mail: gaulwy@mail.nankai.edu.cn}
\ and Fang-Wei Fu\thanks{F.-W. Fu is with the Chern Institute of
Mathematics and the Key Laboratory of Pure Mathematics and
Combinatorics, Nankai University, Tianjin 300071, P.R. China. Email:
fwfu@nankai.edu.cn}}

\date{}
\maketitle

\begin{abstract}
Recently, motivated by the study of vectorized stream cipher
systems, the joint linear complexity and joint minimal polynomial of
multisequences have been investigated. Let $\mathcal{S}$ be a linear
recurring sequence over finite field $\mathbb{F}_{q^m}$ with minimal
polynomial $h(x)$ over $\mathbb{F}_{q^m}$. Since $\mathbb{F}_{q^m}$
and $\mathbb{F}_{q}^m$ are isomorphic vector spaces over the finite
field $\mathbb{F}_q$, $\mathcal{S}$ is identified with an $m$-fold
multisequence ${\bf S}^{(m)}$ over the finite field $\mathbb{F}_q$.
The joint minimal polynomial and joint linear complexity of the
$m$-fold multisequence ${\bf S}^{(m)}$ are the minimal polynomial
and linear complexity over $\mathbb{F}_q$ of $\mathcal{S}$
respectively. In this paper, we study the minimal polynomial and
linear complexity over $\mathbb{F}_q$ of a linear recurring sequence
$\mathcal{S}$ over $\mathbb{F}_{q^m}$ with minimal polynomial $h(x)$
over $\mathbb{F}_{q^m}$. If the canonical factorization of $h(x)$ in
$\mathbb{F}_{q^m}[x]$ is known, we determine the minimal polynomial
and linear complexity over $\mathbb{F}_q$ of the linear recurring
sequence $\mathcal{S}$ over $\mathbb{F}_{q^m}$.

\end{abstract}

{\bf Keywords:}\quad Linear recurring sequences, minimal polynomial,
linear complexity, multisequences, joint minimal polynomial, joint
linear complexity.

{\bf AMS Classifications:}\quad  94A55, 94A60

\baselineskip=18pt

\section{Introduction}

Let $\mathbb{F}_{q^m}$ be a finite field with $q^m$ elements, which
contains a subfield $\mathbb{F}_q$ with $q$ elements. Let
$\mathcal{S}=(s_0,s_1,\ldots,s_n,\ldots)$ be a linear recurring
sequence over $\mathbb{F}_{q^m}$. The monic polynomial
$f(x)=a_0+a_1x+\cdots+a_{n-1}x^{n-1}+x^n \in \mathbb{F}_{q^m}[x]$ is
called a characteristic polynomial over $\mathbb{F}_{q^m}$ of
$\mathcal{S}$ if
\[
a_0s_k+a_1s_{k+1}+a_2s_{k+2}+\cdots+a_{n-1}s_{k+n-1}+s_{k+n}=0,\ \ \
\mbox{for all}\ k\geq 0. \] If the characteristic polynomial $f(x)$
is a polynomial over $\mathbb{F}_q$, that is, all $a_i \in
\mathbb{F}_q$, we call $f(x)$ a characteristic polynomial over
$\mathbb{F}_q$ of $\mathcal{S}$. Since the linear recurring sequence
$\mathcal{S}$ over $\mathbb{F}_{q^m}$ is ultimately periodic, a
characteristic polynomial over $\mathbb{F}_q$ of $\mathcal{S}$ does
exist. The minimal polynomial over $\mathbb{F}_{q^m}$ (resp.
$\mathbb{F}_q$) of $\mathcal{S}$ is the uniquely determined
characteristic polynomial over $\mathbb{F}_{q^m}$ (resp.
$\mathbb{F}_q$) of $\mathcal{S}$ with least degree. The linear
complexity over $\mathbb{F}_{q^m}$ (resp. $\mathbb{F}_q$) of
$\mathcal{S}$ is the degree of the minimal polynomial over
$\mathbb{F}_{q^m}$ (resp. $\mathbb{F}_q$) of $\mathcal{S}$. Let
$h(x)$ be the minimal polynomial over $\mathbb{F}_{q^m}$ of
$\mathcal{S}$. It is known that $h(x)|f(x)$ for any characteristic
polynomial $f(x)$ over $\mathbb{F}_{q^m}$ of $\mathcal{S}$.
Similarly, let $H(x)$ be the minimal polynomial over
$\mathbb{F}_{q}$ of $\mathcal{S}$, we have $H(x)|f(x)$ for any
characteristic polynomial $f(x)$ over $\mathbb{F}_{q}$ of
$\mathcal{S}$. Note that a characteristic polynomial $f(x)$ over
$\mathbb{F}_{q}$ of $\mathcal{S}$ is also a characteristic
polynomial over $\mathbb{F}_{q^m}$ of $\mathcal{S}$. Hence,
$h(x)|f(x)$ for any characteristic polynomial $f(x)$ over
$\mathbb{F}_{q}$ of $\mathcal{S}$. In particular, $h(x)|H(x)$.

Similarly, for any $m$-fold multisequence ${\bf
S}^{(m)}=(S_1,S_2,\ldots, S_m)$ over $\mathbb{F}_q$, the monic
polynomial $g(x)\in\mathbb{F}_q[x]$ is called a joint characteristic
polynomial of ${\bf S}^{(m)}$ if $g(x)$ is a characteristic
polynomial of $S_j$ for each $1\leq j\leq m$. The joint minimal
polynomial of ${\bf S}^{(m)}$ is the uniquely determined joint
characteristic polynomial of ${\bf S}^{(m)}$ with least degree, and
the joint linear complexity of ${\bf S}^{(m)}$ is the degree of the
joint minimal polynomial of ${\bf S}^{(m)}$. Since
$\mathbb{F}_{q^m}$ and $\mathbb{F}_{q}^m$ are isomorphic vector
spaces over the finite field $\mathbb{F}_q$, a linear recurring
sequence $\mathcal{S}$ over $\mathbb{F}_{q^m}$ is identified with an
$m$-fold multisequence ${\bf S}^{(m)}$ over $\mathbb{F}_q$. It is
well known that the joint minimal polynomial and joint linear
complexity of the $m$-fold multisequence ${\bf S}^{(m)}$ are the
minimal polynomial and linear complexity over $\mathbb{F}_q$ of
$\mathcal{S}$ respectively.

The linear complexity of sequences is one of the important security
measures for stream cipher systems (see \cite{cdr}, \cite{dxs},
\cite{ru1}, \cite{ru2}). For a general introduction to the theory of
linear feedback shift register sequences, we refer the reader to
\cite[Chapter 8]{ln} and the references therein. The linear
complexity of sequences has been extensively studied by many
researchers. For a recent survey paper, see Niederreiter \cite{n1}.
The notion of linear complexity over $\mathbb{F}_q$ of linear
recurring sequences over $\mathbb{F}_{q^m}$ was introduced by Ding,
Xiao and Shan in \cite{dxs}, and discussed by some authors, for
example, see \cite{cm}, \cite{kl}, \cite{mei}-\cite{mo2},
\cite{mus}, \cite{n1}, \cite{nv}. Recently, in the study of
vectorized stream cipher systems, the joint linear complexity of
multisequences has been extensively investigated (see \cite{diy},
\cite{ds}, \cite{fd}-\cite{hr}, \cite{mei}-\cite{nw2},
\cite{wn}-\cite{xi}).

In this paper, we study the minimal polynomial and linear complexity
over $\mathbb{F}_q$ of a linear recurring sequence $\mathcal{S}$
over $\mathbb{F}_{q^m}$ with minimal polynomial $h(x)$ over
$\mathbb{F}_{q^m}$. If the canonical factorization of $h(x)$ in
$\mathbb{F}_{q^m}[x]$ is known, we determine the minimal polynomial
and linear complexity over $\mathbb{F}_q$ of the linear recurring
sequence $\mathcal{S}$ over $\mathbb{F}_{q^m}$. The rest of the
paper is organized as follows. In Section \ref{lrs} we introduce and
give some results on linear recurring sequences that will be used in
this paper. In Section \ref{pra} we introduce a ring automorphism of
the polynomial ring $\mathbb{F}_{q^m}[x]$. We derive some results on
this polynomial ring automorphism that are crucial to establish the
main results in this paper. In Section \ref{mp} we determine the
minimal polynomial and linear complexity over $\mathbb{F}_q$ of a
linear recurring sequence $\mathcal{S}$ over $\mathbb{F}_{q^m}$ with
minimal polynomial $h(x)$ over $\mathbb{F}_{q^m}$. In Section
\ref{lbmo} we give a new proof for the lower bound of Meidl and
\"Ozbudak \cite{mo1} on the linear complexity over
$\mathbb{F}_{q^m}$ of linear recurring sequence $\mathcal{S}$ over
$\mathbb{F}_{q^m}$ with given minimal polynomial $g(x)$ over
$\mathbb{F}_q$. We show that this lower bound is tight if and only
if the minimal polynomial over $\mathbb{F}_{q^m}$ of $\mathcal{S}$
is in certain form.

\section{Linear Recurring Sequences}
\label{lrs}

Let $f(x)$ be a monic polynomial over $\mathbb{F}_q$. Denote
$\mathcal {M}(f(x))$ the set of all linear recurring sequences over
$\mathbb{F}_q$ with characteristic polynomial $f(x)$. Note that
$\mathcal {M}(f(x))$ is a vector space over $\mathbb{F}_q$ with
dimension $\mbox{deg}(f(x))$.  We need the following results on
linear recurring sequences from \cite{ln}:
\begin{Theorem}
\label{th1} {\rm \cite[Theorem 8.55]{ln}}\quad Let $f_1(x),\ldots,
f_k(x)$ be monic polynomials over $\mathbb{F}_q$. If $f_1(x),\ldots,
f_k(x)$ are pairwise relatively prime, then the vector space\\
$\mathcal{M}(f_1(x)\cdots f_k(x))$ is the direct sum of the
subspaces $\mathcal{M}(f_1(x)),\cdots, \mathcal{M}(f_k(x))$, that is
\[
\mathcal{M}(f_1(x)\cdots f_k(x))=\mathcal{M}(f_1(x))\dotplus \cdots
\dotplus \mathcal{M}(f_k(x)).
\]
\end{Theorem}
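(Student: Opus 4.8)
The plan is to prove the equivalent assertion that $\mathcal{M}(f_1(x)\cdots f_k(x))$ \emph{equals} the ordinary sum $\mathcal{M}(f_1(x))+\cdots+\mathcal{M}(f_k(x))$ of its subspaces, and then upgrade this sum to a direct sum by a dimension count. It is convenient to work with the shift operator $E$ on the space of all sequences over $\mathbb{F}_q$, defined by $(Es)_n=s_{n+1}$: the assignment $x\mapsto E$ turns this space into an $\mathbb{F}_q[x]$-module, and by the definition of a characteristic polynomial one has $s\in\mathcal{M}(f(x))$ if and only if $f(E)s=0$. Write $f(x)=f_1(x)\cdots f_k(x)$ and $g_i(x)=f(x)/f_i(x)=\prod_{j\neq i}f_j(x)$.

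The first step is the easy inclusion. Since $f(E)=g_i(E)f_i(E)$ and the operators commute, any sequence killed by $f_i(E)$ is killed by $f(E)$; hence $\mathcal{M}(f_i(x))\subseteq\mathcal{M}(f(x))$ for every $i$, and because $\mathcal{M}(f(x))$ is a vector space the whole sum $\mathcal{M}(f_1(x))+\cdots+\mathcal{M}(f_k(x))$ lies inside $\mathcal{M}(f(x))$.

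The second step is the reverse inclusion, and it is the only place where pairwise coprimality is genuinely used. It gives $\gcd(g_1,\ldots,g_k)=1$: a monic irreducible dividing every $g_i$ would divide $f_j$ for two distinct indices $j$, which is impossible. So there exist $c_1(x),\ldots,c_k(x)\in\mathbb{F}_q[x]$ with $\sum_{i=1}^{k}c_i(x)g_i(x)=1$. For $s\in\mathcal{M}(f(x))$ put $s^{(i)}=c_i(E)g_i(E)s$. The B\'ezout identity, read as an identity of operators, yields $s=\sum_{i=1}^{k}s^{(i)}$, while $f_i(E)s^{(i)}=c_i(E)\bigl(f_i(E)g_i(E)\bigr)s=c_i(E)f(E)s=0$, so $s^{(i)}\in\mathcal{M}(f_i(x))$. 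Therefore $\mathcal{M}(f(x))=\mathcal{M}(f_1(x))+\cdots+\mathcal{M}(f_k(x))$.

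It remains to see the sum is direct, and here I would simply count dimensions using the fact recorded before the theorem that $\dim\mathcal{M}(f(x))=\deg f(x)$: one has $\sum_{i=1}^{k}\dim\mathcal{M}(f_i(x))=\sum_{i=1}^{k}\deg f_i(x)=\deg f(x)=\dim\mathcal{M}(f(x))$, and a finite collection of subspaces whose sum is the whole ambient space and whose dimensions add up to the dimension of that space necessarily forms a direct sum (the natural map from the external direct sum onto the ambient space is a surjection of equidimensional finite-dimensional spaces, hence an isomorphism). The step I expect to demand the most care is the reverse inclusion --- in particular justifying the passage from the B\'ezout identity in $\mathbb{F}_q[x]$ to the corresponding identity of shift operators and verifying that each summand $s^{(i)}$ really lands in $\mathcal{M}(f_i(x))$; the remaining steps are bookkeeping. (One could instead establish directness by hand, checking $\mathcal{M}(f_i(x))\cap\sum_{j\neq i}\mathcal{M}(f_j(x))=\{0\}$ via $\sum_{j\neq i}\mathcal{M}(f_j(x))\subseteq\mathcal{M}(g_i(x))$ together with $\mathcal{M}(f_i(x))\cap\mathcal{M}(g_i(x))=\mathcal{M}(\gcd(f_i,g_i))=\mathcal{M}(1)=\{0\}$, but this route needs the extra lemma $\mathcal{M}(a)\cap\mathcal{M}(b)=\mathcal{M}(\gcd(a,b))$ and is less economical.)
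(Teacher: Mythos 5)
Your proof is correct, but note that the paper itself offers no proof of this statement: it is quoted directly from Lidl and Niederreiter \cite[Theorem 8.55]{ln}, so there is no argument in the paper to compare yours against. Your route --- viewing the sequence space as an $\mathbb{F}_q[x]$-module via the shift operator $E$, using a B\'ezout identity for the cofactors $g_i=f/f_i$ to write $s=\sum_i c_i(E)g_i(E)s$ with $f_i(E)\bigl(c_i(E)g_i(E)s\bigr)=c_i(E)f(E)s=0$, and then upgrading the sum to a direct sum by comparing $\sum_i\deg f_i$ with $\dim\mathcal{M}(f)=\deg f$ --- is the standard one and is complete. The two points you flag as delicate are indeed fine: the passage from the polynomial identity $\sum_i c_ig_i=1$ to the corresponding identity of operators is legitimate because $p(x)\mapsto p(E)$ is a ring homomorphism, and $\gcd(g_1,\ldots,g_k)=1$ does follow from pairwise coprimality exactly as you argue. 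The only external ingredient is $\dim\mathcal{M}(f)=\deg f$, which the paper records immediately before the theorem, so nothing is missing; your parenthetical alternative (checking $\mathcal{M}(f_i)\cap\sum_{j\neq i}\mathcal{M}(f_j)=\{0\}$ via $\mathcal{M}(a)\cap\mathcal{M}(b)=\mathcal{M}(\gcd(a,b))$) is closer in spirit to the textbook's own development but, as you say, needs an extra lemma.
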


\begin{Theorem}
\label{th2} {\rm \cite[Theorem 8.57]{ln}}\quad Let $S_1, S_2,
\ldots, S_k$ be linear recurring sequences over $\mathbb{F}_{q}$.
The minimal polynomials over $\mathbb{F}_{q}$ of $S_1, S_2, \ldots,
S_k$ are $h_1(x), h_2(x), \ldots, h_k(x)$ respectively. If $h_1(x),
h_2(x), \ldots, h_k(x)$ are pairwise relatively prime, then the
minimal polynomial over $\mathbb{F}_{q}$ of $\sum_{i=1}^{k}S_i$ is
the product of $h_1(x),h_2(x),\ldots, h_k(x)$.
\end{Theorem}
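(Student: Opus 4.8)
The plan is to reduce to the case $k=2$ by induction on $k$ and then settle the two-term statement by an elementary divisibility argument. For the inductive step, suppose the result holds for $k-1$ sequences. Since $h_1(x),\ldots,h_{k-1}(x)$ are among pairwise relatively prime polynomials, they are themselves pairwise relatively prime, so the minimal polynomial over $\mathbb{F}_q$ of $S_1+\cdots+S_{k-1}$ is $h_1(x)\cdots h_{k-1}(x)$. Each $h_i(x)$ with $i<k$ is relatively prime to $h_k(x)$, hence so is the product $h_1(x)\cdots h_{k-1}(x)$; applying the case $k=2$ to the two sequences $S_1+\cdots+S_{k-1}$ and $S_k$ then gives that the minimal polynomial of $\sum_{i=1}^{k}S_i$ is $h_1(x)\cdots h_k(x)$. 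So it suffices to treat $k=2$. Write $S=S_1+S_2$ and let $m(x)$ be its minimal polynomial over $\mathbb{F}_q$.

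For the divisibility $m(x)\mid h_1(x)h_2(x)$, note that $h_i(x)\mid h_1(x)h_2(x)$ implies $\mathcal{M}(h_i(x))\subseteq\mathcal{M}(h_1(x)h_2(x))$ for $i=1,2$, because multiplying a characteristic polynomial of a sequence by an arbitrary polynomial yields again a characteristic polynomial of that sequence. Hence $S_1,S_2\in\mathcal{M}(h_1(x)h_2(x))$, and since $\mathcal{M}(h_1(x)h_2(x))$ is a vector space over $\mathbb{F}_q$ we get $S=S_1+S_2\in\mathcal{M}(h_1(x)h_2(x))$; equivalently $h_1(x)h_2(x)$ is a characteristic polynomial of $S$, so $m(x)\mid h_1(x)h_2(x)$. (Alternatively, Theorem \ref{th1} gives $\mathcal{M}(h_1(x)h_2(x))=\mathcal{M}(h_1(x))\dotplus\mathcal{M}(h_2(x))$, which visibly contains $S_1+S_2$.)

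For the reverse direction I will show that \emph{every} characteristic polynomial $f(x)$ of $S$ over $\mathbb{F}_q$ is divisible by $h_1(x)h_2(x)$. Indeed, $f(x)$ is a characteristic polynomial of $S$ and $h_2(x)$ is one of $S_2$, so $f(x)h_2(x)$ is a characteristic polynomial of both $S$ and $S_2$; since $\mathcal{M}(f(x)h_2(x))$ is a vector space containing $S$ and $S_2$, it contains $S_1=S-S_2$, i.e. $f(x)h_2(x)$ is a characteristic polynomial of $S_1$. Therefore $h_1(x)\mid f(x)h_2(x)$, and $\gcd(h_1(x),h_2(x))=1$ forces $h_1(x)\mid f(x)$. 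By the symmetric argument $h_2(x)\mid f(x)$, and since $h_1(x),h_2(x)$ are relatively prime this gives $h_1(x)h_2(x)\mid f(x)$. Taking $f(x)=m(x)$ and combining with the previous paragraph yields $m(x)=h_1(x)h_2(x)$, both polynomials being monic.

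The argument is essentially routine, and there is no serious obstacle; the one point that deserves to be stated explicitly (and is used in both halves) is that the characteristic polynomials over $\mathbb{F}_q$ of a fixed linear recurring sequence, together with the zero polynomial, form an ideal of $\mathbb{F}_q[x]$ — in particular $\mathcal{M}(f(x))\subseteq\mathcal{M}(f(x)g(x))$ for every monic $g(x)$, and $\mathcal{M}(\cdot)$ is closed under addition of sequences. Everything else is polynomial divisibility together with the inductive bookkeeping in the first paragraph.
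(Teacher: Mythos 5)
The paper does not prove this statement at all: it is quoted verbatim as \cite[Theorem 8.57]{ln}, so there is no internal proof to compare against. Your argument is correct and complete. The reduction to $k=2$ by induction is sound, the inclusion $\mathcal{M}(h_i)\subseteq\mathcal{M}(h_1h_2)$ is correctly justified by the fact that any monic multiple of a characteristic polynomial is again a characteristic polynomial, and the key step --- that $f(x)h_2(x)$ annihilates both $S$ and $S_2$, hence annihilates $S_1=S-S_2$, so $h_1(x)\mid f(x)h_2(x)$ and coprimality gives $h_1(x)\mid f(x)$ --- is exactly right. For comparison, the standard route in \cite{ln} (and the one implicitly underlying the paper's Lemma \ref{lemma2}) instead leans on the direct-sum decomposition of Theorem \ref{th1}: one shows $m(x)=g_1(x)\cdots g_k(x)$ with $g_i\mid h_i$, places $\sum S_i$ in $\mathcal{M}(g_1)\dotplus\cdots\dotplus\mathcal{M}(g_k)$, and invokes uniqueness of the decomposition to force $g_i=h_i$. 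Your divisibility argument avoids the direct-sum machinery entirely and is, if anything, more elementary; the direct-sum route buys a uniform framework that the paper reuses in Lemma \ref{lemma2}, but nothing is lost by your approach for this theorem alone.
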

It is easy to extend this result to the following case:
\begin{Lemma}
\label{lemma1} Let $\mathcal{S}_1, \mathcal{S}_2, \ldots,
\mathcal{S}_k$ be linear recurring sequences over
$\mathbb{F}_{q^m}$. The minimal polynomials over $\mathbb{F}_{q}$ of
$\mathcal{S}_1, \mathcal{S}_2, \ldots, \mathcal{S}_k$ are $H_1(x),
H_2(x), \ldots, H_k(x)$ respectively. If $H_1(x), H_2(x), \ldots,
H_k(x)$ are pairwise relatively prime over $\mathbb{F}_{q}$, then
the minimal polynomial over $\mathbb{F}_{q}$ of
$\sum_{i=1}^{k}\mathcal{S}_i$ is the product of
$H_1(x),H_2(x),\ldots, H_k(x)$.
\end{Lemma}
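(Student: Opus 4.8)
The plan is to reduce the claim to Theorem~\ref{th2} by passing through a fixed $\mathbb{F}_q$-basis of $\mathbb{F}_{q^m}$. Fix such a basis $\beta_1,\dots,\beta_m$ and write $\mathcal{S}_i=\sum_{j=1}^{m}\beta_j S_{i,j}$, where each $S_{i,j}$ is a linear recurring sequence over $\mathbb{F}_q$. A monic $f\in\mathbb{F}_q[x]$ is a characteristic polynomial over $\mathbb{F}_q$ of $\mathcal{S}_i$ exactly when it is a characteristic polynomial of every $S_{i,j}$; hence, writing $h_{i,j}$ for the minimal polynomial over $\mathbb{F}_q$ of $S_{i,j}$ (with the convention $h_{i,j}=1$ when $S_{i,j}=0$), we get $H_i=\mbox{lcm}_{1\le j\le m}\,h_{i,j}$, and in particular $h_{i,j}\mid H_i$ for every $j$. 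The same description of characteristic polynomials applies to the sum $\sum_{i=1}^{k}\mathcal{S}_i=\sum_{j=1}^{m}\beta_j\bigl(\sum_{i=1}^{k}S_{i,j}\bigr)$, so its minimal polynomial over $\mathbb{F}_q$ is the least common multiple over $1\le j\le m$ of the minimal polynomials over $\mathbb{F}_q$ of the component sums $\sum_{i=1}^{k}S_{i,j}$.

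Next I would compute each of these component minimal polynomials. Fix $j$. Since $h_{i,j}\mid H_i$ and the $H_i$ are pairwise relatively prime over $\mathbb{F}_q$, the polynomials $h_{1,j},\dots,h_{k,j}$ are pairwise relatively prime as well, so Theorem~\ref{th2} gives that the minimal polynomial over $\mathbb{F}_q$ of $\sum_{i=1}^{k}S_{i,j}$ is $\prod_{i=1}^{k}h_{i,j}$. Combining this with the previous paragraph, the minimal polynomial over $\mathbb{F}_q$ of $\sum_{i=1}^{k}\mathcal{S}_i$ is $\mbox{lcm}_{1\le j\le m}\prod_{i=1}^{k}h_{i,j}$.

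Finally I would identify this polynomial with $\prod_{i=1}^{k}H_i$. For each fixed $j$ the factors $h_{1,j},\dots,h_{k,j}$ are pairwise coprime, so $\prod_{i=1}^{k}h_{i,j}=\mbox{lcm}_{1\le i\le k}\,h_{i,j}$; therefore $\mbox{lcm}_{j}\prod_{i}h_{i,j}=\mbox{lcm}_{i,j}\,h_{i,j}=\mbox{lcm}_{i}\bigl(\mbox{lcm}_{j}\,h_{i,j}\bigr)=\mbox{lcm}_{1\le i\le k}\,H_i$, and since the $H_i$ are pairwise relatively prime this last lcm equals $\prod_{i=1}^{k}H_i$, which is the assertion.

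The argument has no real obstacle; the only point needing a little care is the interchange of $\mbox{lcm}$ and product in the last step, which rests entirely on the coprimality hypothesis on the $H_i$ (and the handling of the degenerate case $S_{i,j}=0$, absorbed by the convention $h_{i,j}=1$). An alternative route, avoiding the basis, is to repeat the proof of Theorem~\ref{th2} in the $\mathbb{F}_q[x]$-module of all sequences over $\mathbb{F}_{q^m}$ that admit a characteristic polynomial in $\mathbb{F}_q[x]$: for this one only needs that the direct-sum decomposition of Theorem~\ref{th1} still holds in this module whenever the pairwise coprime factors lie in $\mathbb{F}_q[x]$ (the same B\'ezout argument works), after which an operator coprime to $H_i$ acts invertibly on the corresponding sequence space and the conclusion follows exactly as in Theorem~\ref{th2}.
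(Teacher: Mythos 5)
Your proof is correct: the decomposition of each $\mathcal{S}_i$ over an $\mathbb{F}_q$-basis of $\mathbb{F}_{q^m}$, the identification of $H_i$ as the lcm of the component minimal polynomials, the componentwise application of Theorem~\ref{th2}, and the final interchange of lcm and product via the coprimality of the $H_i$ are all sound (including your handling of zero components). The paper gives no proof at all for this lemma, merely calling it an easy extension of Theorem~\ref{th2}; your argument is precisely that extension carried out in full, so it matches the intended route.
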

Now we establish the following lemma which will be used in this paper:
\begin{Lemma}
\label{lemma2} Let $S$ be a linear recurring sequence over
$\mathbb{F}_{q}$. The minimal polynomial over $\mathbb{F}_{q}$ of
$S$ is given by $h(x)=h_1(x)h_2(x)\cdots h_k(x)$ where $h_1(x),
h_2(x), \ldots, h_k(x)$ are monic polynomials over $\mathbb{F}_{q}$.
If $h_1(x), h_2(x), \ldots, h_k(x)$ are pairwise relatively prime,
then there uniquely exist sequences $S_1, S_2, \ldots, S_k$ over
$\mathbb{F}_{q}$ such that
\[
S=S_1+S_2+\cdots+S_k
\]
and the minimal polynomials over $\mathbb{F}_{q}$ of
$S_1,S_2,\ldots,S_k$ are $h_1(x), h_2(x), \ldots, h_k(x)$
respectively.
\end{Lemma}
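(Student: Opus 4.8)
The plan is to read off existence and uniqueness of the decomposition from the direct-sum structure of Theorem~\ref{th1}, and then to invoke Theorem~\ref{th2} to check that the summands carry exactly the prescribed minimal polynomials (rather than proper divisors of them).

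First I would note that since $h(x)$ is the minimal polynomial over $\mathbb{F}_q$ of $S$, it is in particular a characteristic polynomial of $S$, so $S \in \mathcal{M}(h(x))$. As $h_1(x),\ldots,h_k(x)$ are pairwise relatively prime, Theorem~\ref{th1} gives
\[
\mathcal{M}(h(x)) = \mathcal{M}(h_1(x)) \dotplus \cdots \dotplus \mathcal{M}(h_k(x)).
\]
Hence there exist unique sequences $S_1,\ldots,S_k$ over $\mathbb{F}_q$ with $S_i \in \mathcal{M}(h_i(x))$ for each $i$ and $S = S_1 + \cdots + S_k$.

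Next I would show that the minimal polynomial over $\mathbb{F}_q$ of $S_i$ is precisely $h_i(x)$. Let $h_i'(x)$ denote this minimal polynomial; since $h_i(x)$ is a characteristic polynomial of $S_i$, we have $h_i'(x)\mid h_i(x)$, and in particular $h_1'(x),\ldots,h_k'(x)$ are again pairwise relatively prime. By Theorem~\ref{th2}, the minimal polynomial over $\mathbb{F}_q$ of $\sum_{i=1}^{k} S_i = S$ equals $h_1'(x)\cdots h_k'(x)$; but that minimal polynomial is $h(x) = h_1(x)\cdots h_k(x)$. Comparing degrees, $\sum_i \deg h_i'(x) = \deg h(x) = \sum_i \deg h_i(x)$ while $\deg h_i'(x) \le \deg h_i(x)$ for every $i$, which forces $\deg h_i'(x) = \deg h_i(x)$; since $h_i'(x)\mid h_i(x)$ with both monic, $h_i'(x) = h_i(x)$.

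Finally, for the uniqueness asserted in the lemma: if $S = S_1' + \cdots + S_k'$ is any decomposition in which $S_i'$ has minimal polynomial $h_i(x)$, then $h_i(x)$ is a characteristic polynomial of $S_i'$, so $S_i' \in \mathcal{M}(h_i(x))$, and the directness of the sum in Theorem~\ref{th1} yields $S_i' = S_i$ for all $i$. The only point requiring a little care is precisely this last observation — that prescribing the minimal polynomial of a summand forces it into the corresponding $\mathcal{M}(h_i(x))$ — together with the degree count that upgrades "$h_i'(x)\mid h_i(x)$" to equality; the remainder is a direct application of Theorems~\ref{th1} and~\ref{th2}.
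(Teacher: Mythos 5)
Your proposal is correct and follows essentially the same route as the paper: apply Theorem~\ref{th1} to get the unique direct-sum decomposition with $S_i\in\mathcal{M}(h_i(x))$, then use Theorem~\ref{th2} together with $h_i'(x)\mid h_i(x)$ and the product identity to force $h_i'(x)=h_i(x)$. Your explicit degree count and the closing remark that a prescribed minimal polynomial $h_i(x)$ places $S_i'$ back in $\mathcal{M}(h_i(x))$ (so directness gives uniqueness) are slightly more careful than the paper's write-up, but they are the same argument.
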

\begin{proof}
By Theorem \ref{th1}, we have
\[
\mathcal{M}(h(x))=\mathcal{M}(h_1(x))\dotplus \cdots \dotplus
\mathcal{M}(h_k(x)).
\]
Then, there uniquely exist sequences $S_1,S_2,\ldots,S_k$ over
$\mathbb{F}_{q}$ such that $S_j\in \mathcal{M}(h_j(x))$ and
\[
S=S_1+S_2+\cdots+S_k.
\]
Assume that the minimal polynomial over $\mathbb{F}_{q}$ of $S_j$ is
$h_j^{'}(x)$ which is a divisor of $h_j(x)$ for $1\leq j\leq k$. By
Theorem \ref{th2}, the minimal polynomial over $\mathbb{F}_{q}$ of
$S$ is $\prod_{j=1}^{k}h_j^{'}(x)$. Thus,
\[
h_1^{'}(x)h_2^{'}(x)\cdots h_k^{'}(x)=h_1(x)h_2(x)\cdots h_k(x).
\]
Since $h_j^{'}(x)| h_j(x)$ for $1\leq j\leq k$, we have
\[
h_j^{'}(x)=h_j(x), \;\; 1\leq j\leq k,
\]
which completes the proof.
\end{proof}

\section{Polynomial Ring Automorphism}
\label{pra}

We define $\sigma$ to be a mapping from the polynomial ring
$\mathbb{F}_{q^m}[x]$ to itself as follows: For
$f(x)=a_0+a_1x+\cdots+a_nx^n\in \mathbb{F}_{q^m}[x]$,
\[\sigma: \mathbb{F}_{q^m}[x]\longrightarrow \mathbb{F}_{q^m}[x],\]
\[f(x)\longrightarrow \sigma(f(x))\]
where $\sigma(f(x))=a_0^q+a_1^qx+\cdots+a_n^qx^n$. It is easy to see
that $\sigma$ is a ring automorphism of $\mathbb{F}_{q^m}[x]$.
Throughout the paper, we will use the fact that
\[ \sigma(f(x)g(x))=\sigma(f(x))\sigma(g(x)), \;\;\mbox{for any} \; f(x), g(x)\in
\mathbb{F}_{q^m}[x]. \] Denote $\sigma^{(k)}$ the $k$th usual
composition of $\sigma$. Note that $\sigma^{(0)}$ is the identity
mapping. Since $a^{q^m}=a$ for any $a\in \mathbb{F}_{q^m}$, we have
$\sigma^{(m)}(f(x))=f(x)$. Denote $k(f)$ the minimum positive
integer $k$ such that $\sigma^{(k)}(f(x))=f(x)$.
\begin{Lemma}
\label{lemma3}
For any $f(x)\in \mathbb{F}_{q^m}[x]$ and positive integer $l$,
$\sigma^{(l)}(f(x))=f(x)$ if and only if $k(f)|l$.
\end{Lemma}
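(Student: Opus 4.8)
The plan is to recognize that the positive integers $l$ satisfying $\sigma^{(l)}(f(x))=f(x)$ form a subgroup of $\mathbb{Z}$ under addition, together with $0$; since every nontrivial subgroup of $\mathbb{Z}$ is generated by its least positive element, this subgroup must be exactly $k(f)\mathbb{Z}$, which is precisely the statement that $\sigma^{(l)}(f(x))=f(x)$ if and only if $k(f)\mid l$. Concretely I would argue both directions directly.

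For the "if" direction, suppose $k(f)\mid l$, say $l=k(f)\cdot t$ for some nonnegative integer $t$. Since $\sigma^{(k(f))}(f(x))=f(x)$ by the definition of $k(f)$, applying $\sigma^{(k(f))}$ repeatedly $t$ times and using $\sigma^{(a)}\circ\sigma^{(b)}=\sigma^{(a+b)}$ (immediate from the definition of $\sigma$ as the coefficientwise $q$-power Frobenius on $\mathbb{F}_{q^m}$, since $a\mapsto a^q$ composed $a+b$ times equals $a\mapsto a^{q^{a+b}}$) gives $\sigma^{(l)}(f(x))=\sigma^{(k(f))}\big(\cdots\sigma^{(k(f))}(f(x))\cdots\big)=f(x)$. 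Strictly one inducts on $t$: the base case $t=0$ is $\sigma^{(0)}(f(x))=f(x)$, and the inductive step writes $\sigma^{(k(f)(t+1))}(f(x))=\sigma^{(k(f))}\big(\sigma^{(k(f)t)}(f(x))\big)=\sigma^{(k(f))}(f(x))=f(x)$.

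For the "only if" direction, suppose $\sigma^{(l)}(f(x))=f(x)$ with $l>0$. Write $l=k(f)\cdot t+r$ with $0\le r<k(f)$ by the division algorithm. Then, using the composition law again and the "if" direction just established (so that $\sigma^{(k(f)t)}(f(x))=f(x)$), we get $f(x)=\sigma^{(l)}(f(x))=\sigma^{(r)}\big(\sigma^{(k(f)t)}(f(x))\big)=\sigma^{(r)}(f(x))$. Since $0\le r<k(f)$ and $k(f)$ is the \emph{minimum} positive integer with $\sigma^{(k(f))}(f(x))=f(x)$, the only possibility is $r=0$; hence $k(f)\mid l$. I do not anticipate any real obstacle here: the only thing to be careful about is to have the composition identity $\sigma^{(a+b)}=\sigma^{(a)}\circ\sigma^{(b)}$ on hand (it is essentially built into the definition of $\sigma^{(k)}$) and to invoke the "if" direction before the "only if" direction so that $\sigma^{(k(f)t)}(f(x))=f(x)$ is available when reducing the exponent modulo $k(f)$.
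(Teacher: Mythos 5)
Your proof is correct and follows essentially the same route as the paper: the "if" direction by iterating $\sigma^{(k(f))}$, and the "only if" direction by writing $l=k(f)w+r$ with $0\le r<k(f)$, deducing $\sigma^{(r)}(f(x))=f(x)$, and invoking the minimality of $k(f)$ to force $r=0$. The subgroup-of-$\mathbb{Z}$ framing and the explicit induction are just more detailed packaging of the argument the paper gives.
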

\begin{proof}
It is easy to see that $\sigma^{(l)}(f(x))=f(x)$ if $k(f)|l$.
On the other hand, if $\sigma^{(l)}(f(x))=f(x)$, we assume that
$l=k(f)w+r$ and $0\leq r<k(f)$. Then
\[ f(x)=\sigma^{(l)}(f(x))=\sigma^{(r)}(\sigma^{(k(f)w)}(f(x)))=\sigma^{(r)}(f(x)). \]
Hence, $r=0$ by the definition of $k(f)$. Therefore, $k(f)|l$.
\end{proof}

Now we define an equivalence relation $\stackrel{\sigma}{\sim}$ on
$\mathbb{F}_{q^m}[x]$: $f(x)\stackrel{\sigma}{\sim} g(x)$ if and
only if there exists positive integer $j$ such that
$\sigma^{(j)}(f(x))=g(x)$. The equivalence classes induced by this
equivalence relation $\stackrel{\sigma}{\sim}$ are called
$\sigma$-equivalence classes.

\begin{Lemma}
\label{lemma4} Let $f(x)$ be a polynomial over $\mathbb{F}_{q^m}$.
Then $\sigma(f(x))$ is irreducible over $\mathbb{F}_{q^m}$ if and
only if $f(x)$ is irreducible over $\mathbb{F}_{q^m}$.
\end{Lemma}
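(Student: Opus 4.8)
The plan is to exploit the fact, already recorded in the excerpt, that $\sigma$ is a ring automorphism of $\mathbb{F}_{q^m}[x]$ which is multiplicative and satisfies $\sigma^{(m)}(f(x))=f(x)$, so that $\sigma^{(m-1)}$ serves as a two-sided inverse of $\sigma$. The only extra ingredient needed is that $\sigma$ preserves degrees: since the restriction of $\sigma$ to the constant polynomials is the Frobenius map $a\mapsto a^q$, which is an automorphism of $\mathbb{F}_{q^m}$, a coefficient $a_i$ vanishes if and only if $a_i^q$ vanishes, whence $\deg\sigma(f(x))=\deg f(x)$ for all $f(x)$. In particular $\sigma$ sends constants to constants and non-constants to non-constants, so it maps the units of $\mathbb{F}_{q^m}[x]$ onto the units and the non-units onto the non-units; the same holds for the iterate $\sigma^{(m-1)}$.

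Granting this, I would argue as follows. Suppose $f(x)$ is irreducible over $\mathbb{F}_{q^m}$. Then $f(x)$ is a non-unit, hence so is $\sigma(f(x))$. If we had a factorization $\sigma(f(x))=g(x)h(x)$ with $g(x),h(x)\in\mathbb{F}_{q^m}[x]$ both non-units, then applying $\sigma^{(m-1)}$ and using multiplicativity would give $f(x)=\sigma^{(m-1)}(g(x))\,\sigma^{(m-1)}(h(x))$; since $\sigma^{(m-1)}$ also preserves degrees, both factors are still non-units, contradicting the irreducibility of $f(x)$. Hence $\sigma(f(x))$ is irreducible.

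For the converse I would simply apply the implication just proved with $\sigma(f(x))$ in the role of $f(x)$: if $\sigma(f(x))$ is irreducible, then $\sigma^{(2)}(f(x))$ is irreducible, and iterating $m-1$ times (equivalently, applying the degree-preserving ring automorphism $\sigma^{(m-1)}$) shows that $\sigma^{(m)}(f(x))=f(x)$ is irreducible. In fact both directions can be packaged at once by observing that $\sigma$ is a degree-preserving ring automorphism of $\mathbb{F}_{q^m}[x]$ and that any such map carries the set of irreducible polynomials bijectively onto itself. I do not anticipate a genuine obstacle: the only point requiring care is the degree-preservation of $\sigma$, which rests solely on the injectivity of Frobenius on $\mathbb{F}_{q^m}$, and everything else is the general principle that ring isomorphisms preserve irreducibility, specialized to the explicit automorphism $\sigma$ and its inverse $\sigma^{(m-1)}$.
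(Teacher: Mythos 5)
Your proposal is correct and follows essentially the same route as the paper: both reduce the converse to the forward implication via $\sigma^{(m)}(f(x))=f(x)$, and both prove the forward implication by pulling a purported factorization of $\sigma(f(x))$ back through $\sigma^{(m-1)}$ to contradict the irreducibility of $f(x)$. Your explicit remark that $\sigma$ preserves degrees (so nonconstant factors stay nonconstant) is a small point the paper leaves implicit, but the argument is identical in substance.
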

\begin{proof}
Since $f(x)\in\mathbb{F}_{q^m}[x]$, we have
$f(x)=\sigma^{(m)}(f(x))$. Then, we only need to prove that
$\sigma(f(x))$ is irreducible over $\mathbb{F}_{q^m}$ if $f(x)$ is
irreducible over $\mathbb{F}_{q^m}$. Assume that $\sigma(f(x))$ is
not irreducible over $\mathbb{F}_{q^m}$, that is to say there exist
two nonconstant polynomials $r_1(x),r_2(x)$ in $\mathbb{F}_{q^m}[x]$
such that $\sigma(f(x))=r_1(x)r_2(x)$. Therefore,
\[f(x)=\sigma^{(m)}(f(x))=\sigma^{(m-1)}(\sigma(f(x)))=\sigma^{(m-1)}(r_1(x))\sigma^{(m-1)}(r_2(x))\]
where $\sigma^{(m-1)}(r_1(x)), \sigma^{(m-1)}(r_2(x))$ are
nonconstant polynomials over $\mathbb{F}_{q^m}$, which contradicts
to the fact that $f(x)$ is irreducible over $\mathbb{F}_{q^m}$.
Hence, $\sigma(f(x))$ is irreducible over $\mathbb{F}_{q^m}$.
\end{proof}
The following theorem is crucial to establish the main results in
this paper.
\begin{Theorem}
\label{th3} Let $f(x)$ be an irreducible polynomial in
$\mathbb{F}_{q^m}[x]$, then the product
\[
f(x)\sigma(f(x))\sigma^{(2)}(f(x))\cdots \sigma^{(k(f)-1)}(f(x))
\]
is an irreducible polynomial in $\mathbb{F}_{q}[x]$.
\end{Theorem}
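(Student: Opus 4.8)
The plan is to identify the displayed product with the minimal polynomial over $\mathbb{F}_q$ of a root of $f(x)$, which is by definition irreducible over $\mathbb{F}_q$. We may assume $f(x)$ is monic (scaling by a nonzero constant affects neither the hypothesis nor the conclusion). Put $k=k(f)$ and $F(x)=f(x)\sigma(f(x))\cdots\sigma^{(k-1)}(f(x))$.

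First I would check that $F(x)\in\mathbb{F}_q[x]$: since $\sigma^{(k)}(f)=f$, the automorphism $\sigma$ merely cyclically permutes the $k$ factors, so $\sigma(F)=F$; as any polynomial whose coefficients are fixed by $a\mapsto a^q$ has all coefficients in $\mathbb{F}_q$, and $F$ is clearly monic, this step is done. Next I would show $F$ is squarefree: by Lemma \ref{lemma4} each $\sigma^{(i)}(f)$ is irreducible over $\mathbb{F}_{q^m}$, and by Lemma \ref{lemma3} the $k$ polynomials $\sigma^{(0)}(f),\ldots,\sigma^{(k-1)}(f)$ are pairwise distinct (if $\sigma^{(i)}(f)=\sigma^{(j)}(f)$ with $i>j$, applying the automorphism $\sigma^{(m-j)}$ and using $\sigma^{(m)}(f)=f$ together with $k\mid m$ gives $\sigma^{(m+i-j)}(f)=f$, hence $k\mid(i-j)$, impossible for $0<i-j<k$); since distinct monic irreducibles are coprime, $F$ has exactly $k\deg f$ roots, all simple.

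The heart of the argument is then to locate these roots. Let $\alpha$ be a root of $f(x)$ in a splitting field; since $f$ is the minimal polynomial of $\alpha$ over $\mathbb{F}_{q^m}$, its roots are the $\mathbb{F}_{q^m}$-conjugates $\alpha,\alpha^{q^m},\alpha^{q^{2m}},\ldots$ of $\alpha$. For any root $\beta$ of $f$, raising $f(\beta)=0$ to the $q^i$-th power gives $\sigma^{(i)}(f)(\beta^{q^i})=0$, so (comparing degrees, since $\beta\mapsto\beta^{q^i}$ is injective) the roots of $\sigma^{(i)}(f)$ are exactly $\{\beta^{q^i}:f(\beta)=0\}$. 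Hence every root of $F$ has the form $\alpha^{q^s}$ for some $s\ge 0$, i.e.\ is a conjugate of $\alpha$ over $\mathbb{F}_q$, hence a root of the minimal polynomial $g(x)\in\mathbb{F}_q[x]$ of $\alpha$ over $\mathbb{F}_q$. Since $g$ is irreducible over the finite field $\mathbb{F}_q$ it is separable, so its roots are simple; thus $F$, being monic, squarefree, and having all its roots among the simple roots of $g$, divides $g$ in $\mathbb{F}_q[x]$. As $g$ is irreducible and $F$ is nonconstant, $F=g$, which is irreducible over $\mathbb{F}_q$.

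I expect the main points needing care to be the claim that the $\sigma$-orbit of $f$ has exactly $k(f)$ members (so that $F$ is squarefree and one may argue with sets of roots rather than multiplicities) and the bookkeeping that places every root of $F$ into the single Frobenius orbit $\{\alpha^{q^s}\}$; once these are in hand, the principle "a monic squarefree polynomial all of whose roots are simple roots of an irreducible $g$ must equal $g$" finishes the proof at once. As a consistency check, this argument also yields $\deg F = k(f)\deg f = [\mathbb{F}_q(\alpha):\mathbb{F}_q]$.
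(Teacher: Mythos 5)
Your proof is correct, and it reaches the same target as the paper---identifying the product with the minimal polynomial $g(x)$ over $\mathbb{F}_q$ of a root $\alpha$ of $f$---but by a genuinely different and shorter route. The paper computes the degree $d=[\mathbb{F}_q(\alpha):\mathbb{F}_q]$ head-on: a chain of $\gcd$ manipulations shows $d=nk$ with $k\mid m$ and $\gcd(m/k,n)=1$, from which it extracts the explicit factorization $f(x)=\prod_j\bigl(x-\alpha^{q^{jk}}\bigr)$, matches the roots of $g$ exactly against those of the product, and only then identifies $k$ with $k(f)$. You bypass all of that arithmetic with a Galois-descent argument: $\sigma$ cyclically permutes the $k(f)$ factors of $F=f\,\sigma(f)\cdots\sigma^{(k(f)-1)}(f)$, so $F$ is fixed by $\sigma$ and hence lies in $\mathbb{F}_q[x]$; $F$ is squarefree because its factors are $k(f)$ pairwise distinct monic irreducibles over $\mathbb{F}_{q^m}$ (your distinctness argument via Lemma \ref{lemma3}, and Lemma \ref{lemma4}, is sound); and every root of $F$ is some $\alpha^{q^s}$, hence a root of $g$. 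Since $g$ is separable and irreducible, a monic squarefree $F\in\mathbb{F}_q[x]$ with all roots among those of $g$ must divide, hence equal, $g$. The one thing the paper's longer computation delivers as a by-product is the closed form for $k(f)$ (namely $d/n$, which feeds into Theorem \ref{th4} and Corollary \ref{cor1}); your argument recovers $\deg F=k(f)\deg f=[\mathbb{F}_q(\alpha):\mathbb{F}_q]$ only a posteriori from $F=g$, but for the bare irreducibility statement your route is cleaner.
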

\begin{proof}
Let $\mbox{deg}(f(x))=n$. Then, by \cite[Chapter 2, Theorem
2.14]{ln} there exits $\alpha\in \mathbb{F}_{q^{mn}}$ such that
\begin{eqnarray}
\label{pr1}
f(x)=(x-\alpha)(x-\alpha^{q^m})(x-\alpha^{q^{2m}})\cdots(x-\alpha^{q^{(n-1)m}})
\end{eqnarray}
where $\alpha,\alpha^{q^m},\ldots,\alpha^{q^{(n-1)m}}$ are different
roots of $f(x)$. Let $g(x)$ be the minimal polynomial of
$\alpha\in\mathbb{F}_{q^{mn}}$ over $\mathbb{F}_q$. By \cite[Chapter
2, Theorem 2.14]{ln}, $g(x)$ is an irreducible polynomial over
$\mathbb{F}_q$ and
\begin{eqnarray}
\label{pr2}
g(x)=(x-\alpha)(x-\alpha^{q})(x-\alpha^{q^2})\cdots(x-\alpha^{q^{d-1}})
\end{eqnarray}
where $d$ is the least positive integer such that
$\alpha^{q^d}=\alpha$. Since $\alpha^{q^{mn}}=\alpha$ and
$\alpha,\alpha^{q^m},\ldots,\alpha^{q^{(n-1)m}}$ are distinct, we
have $d\mid mn$ but $d\nmid im$ for $1\leq i\leq n-1$. Then, we
claim that $d$ must be a multiple of $n$. Otherwise, we have $
\gcd(d,n)<n$. Since $d\mid mn$, then we have $\frac{d}{
\gcd(d,n)}\mid\frac{mn}{\gcd(d,n)}$. Since $\frac{d}{ \gcd(d,n)}$
and $\frac{n}{ \gcd(d,n)}$ are relatively prime, we have $\frac{d}{
\gcd(d,n)}\mid m$. Then, $d\mid \gcd(d,n)m$. This gives a
contradiction since $\gcd(d,n)<n$. Therefore, $d$ is a multiple of
$n$. Let $k$ be the positive integer such that $d=nk$. Since $d\mid
mn$, then $k\mid m$. Let $s$ be the positive integer such that
$m=sk$. Then, we claim that $s$ and $n$ are relatively prime.
Otherwise, we have $\frac{n}{\gcd(n,s)}<n$. Since $n\mid
\frac{ns}{\gcd(n,s)}$, then $kn\mid \frac{kns}{ \gcd(n,s)}$, that is
$d\mid m\frac{n}{\gcd(n,s)}$. This gives a contradiction since
$\frac{n}{\gcd(n,s)}<n$. Therefore, $s$ and $n$ are relatively
prime. Thus, $\{js | j=0,1,\ldots, n-1\}$ is a complete residue
system modulo $n$, i.e., there exits $(i_0,i_1,\ldots,i_{n-1})$, a
permutation of $(0,1,2,\ldots,n-1)$, such that $js\equiv i_j
\;\;({\rm mod} \ n)$. So we have $kjs\equiv ki_j \;\;({\rm mod} \
kn)$, i.e., $jm\equiv ki_j \;\;({\rm mod} \ d)$. Hence,
$\alpha^{q^{jm}}=\alpha^{q^{ki_j}}$ for $0\leq j\leq n-1$.
Therefore, it follows from (\ref{pr1}) that
\begin{eqnarray}
f(x)&=&(x-\alpha^{q^{ki_{0}}})(x-\alpha^{q^{ki_{1}}})(x-\alpha^{q^{ki_{2}}})\cdots
(x-\alpha^{q^{ki_{n-1}}})\nonumber\\
&=&(x-\alpha)(x-\alpha^{q^{k}})(x-\alpha^{q^{2k}})\cdots(x-\alpha^{q^{(n-1)k}}).
\label{pr3}
\end{eqnarray}
By (\ref{pr3}) and the definition of $\sigma$, we have
\begin{eqnarray}
\label{pr4}
\sigma^{(i)}(f(x))=(x-\alpha^{q^i})(x-\alpha^{q^{k+i}})(x-\alpha^{q^{2k+i}})\cdots(x-\alpha^{q^{(n-1)k+i}}).
\end{eqnarray}
By (\ref{pr2}), (\ref{pr3}), (\ref{pr4}) and note that $d=nk$, we
have
\[ g(x)=f(x)\sigma(f(x))\ldots\sigma^{(k-1)}(f(x)) \] and
\[
\sigma^{(k)}(f(x))=(x-\alpha^{q^k})(x-\alpha^{q^{2k}})(x-\alpha^{q^{3k}})\ldots(x-\alpha^{q^{nk}})=f(x).
\]
Since $d$ is the least positive integer such that
$\alpha^{q^d}=\alpha$ and $d=nk$, we have that
$f(x),\sigma(f(x)),\ldots,\sigma^{(k-1)}(f(x))$ are different from
each other. Hence, $k=k(f)$. Therefore,
\[
g(x)=f(x)\sigma(f(x))\cdots\sigma^{(k(f)-1)}(f(x)).
\]
Note that $g(x)$ is an irreducible polynomial over $\mathbb{F}_q$,
we complete the proof.
\end{proof}

Let $f(x)$ be an irreducible polynomial in $\mathbb{F}_{q^m}[x]$. It
is known from Lemma \ref{lemma4} that $f(x), \sigma(f(x)), \ldots,
\sigma^{(k(f)-1)}(f(x))$ are irreducible polynomials in
$\mathbb{F}_{q^m}[x]$. Denote
\[ R(f(x))=f(x)\sigma(f(x))\cdots\sigma^{(k(f)-1)}(f(x)). \]
By Theorem \ref{th3}, $R(f(x))$ is irreducible in
$\mathbb{F}_{q}[x]$. Note that $R(f(x))$ is a multiple of $f(x)$ in
$\mathbb{F}_{q^m}[x]$. Using Theorem \ref{th3}, we could give an
refined version of \cite[Chapter 3, Theorem
 3.46]{ln} as follows:
\begin{Theorem}
\label{th4} Let $f(x)$ be a monic irreducible polynomial over
$\mathbb{F}_{q}$ and $n=\deg(f(x))$. Let $m$ be a positive integer.
Denote $u= {\gcd}(n,m)$. Then the canonical factorization of $f(x)$
into monic irreducibles over $\mathbb{F}_{q^m}$ is given by
\[
f(x)=h(x)\sigma(h(x))\cdots \sigma^{(k(h)-1)}(h(x))
\]
where $h(x)$ is a monic irreducible polynomial over
$\mathbb{F}_{q^m}$ and $k(h)=u$.
\end{Theorem}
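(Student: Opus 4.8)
The plan is to reduce the statement to Theorem \ref{th3} plus a single computation at a root. First I would fix a root $\alpha$ of $f(x)$ in a splitting field; since $f$ is irreducible of degree $n$ over $\mathbb{F}_q$ we have $\mathbb{F}_q(\alpha)=\mathbb{F}_{q^n}$. Let $h(x)$ be the minimal polynomial of $\alpha$ over $\mathbb{F}_{q^m}$, which is a monic irreducible polynomial in $\mathbb{F}_{q^m}[x]$. Because the compositum of $\mathbb{F}_{q^m}$ and $\mathbb{F}_{q^n}$ is $\mathbb{F}_{q^{\mathrm{lcm}(m,n)}}$, we obtain $\deg h(x)=[\mathbb{F}_{q^m}(\alpha):\mathbb{F}_{q^m}]=\mathrm{lcm}(m,n)/m=n/u$. (Alternatively one may quote \cite[Chapter 3, Theorem 3.46]{ln} for the degree of the irreducible factors of $f$ over $\mathbb{F}_{q^m}$.)

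Next I would prove that $R(h(x))=f(x)$, where $R(h(x))=h(x)\sigma(h(x))\cdots\sigma^{(k(h)-1)}(h(x))$. Since $f(x)\in\mathbb{F}_{q^m}[x]$ and $f(\alpha)=0$, we have $h(x)\mid f(x)$ in $\mathbb{F}_{q^m}[x]$; applying the automorphism $\sigma^{(i)}$ and using that $\sigma(f(x))=f(x)$ (all coefficients of $f$ lie in $\mathbb{F}_q$) gives $\sigma^{(i)}(h(x))\mid f(x)$ for every $i$. By Lemma \ref{lemma4} each $\sigma^{(i)}(h(x))$ is irreducible over $\mathbb{F}_{q^m}$, and by the definition of $k(h)$ together with Lemma \ref{lemma3} the polynomials $h(x),\sigma(h(x)),\ldots,\sigma^{(k(h)-1)}(h(x))$ are pairwise distinct monic irreducibles; hence their product $R(h(x))$ divides $f(x)$ in $\mathbb{F}_{q^m}[x]$, and therefore also in $\mathbb{F}_q[x]$, since both $R(h(x))$ and $f(x)$ lie in $\mathbb{F}_q[x]$. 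By Theorem \ref{th3}, $R(h(x))$ is irreducible over $\mathbb{F}_q$; being a monic non-constant divisor of the monic irreducible polynomial $f(x)$ over $\mathbb{F}_q$, it must equal $f(x)$.

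It then only remains to identify $k(h)$ and to recognize the factorization as canonical. Comparing degrees in $R(h(x))=f(x)$ gives $n=\deg f(x)=k(h)\cdot\deg h(x)=k(h)\cdot (n/u)$, hence $k(h)=u$. Finally, the equality $f(x)=h(x)\sigma(h(x))\cdots\sigma^{(k(h)-1)}(h(x))$ displays $f$ as a product of $k(h)=u$ pairwise distinct monic irreducible polynomials over $\mathbb{F}_{q^m}$, so by unique factorization in $\mathbb{F}_{q^m}[x]$ this is the canonical factorization.

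I do not anticipate a serious obstacle; the two points requiring a little care are the degree computation $[\mathbb{F}_{q^m}(\alpha):\mathbb{F}_{q^m}]=n/u$ and the remark that divisibility between polynomials with coefficients in $\mathbb{F}_q$ is insensitive to whether it is tested in $\mathbb{F}_q[x]$ or in $\mathbb{F}_{q^m}[x]$ (needed to pass from $R(h(x))\mid f(x)$ over $\mathbb{F}_{q^m}$ to the same over $\mathbb{F}_q$). Everything else is a direct application of Lemma \ref{lemma3}, Lemma \ref{lemma4}, and Theorem \ref{th3}.
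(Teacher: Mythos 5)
Your proof is correct and follows essentially the same route as the paper's: both identify $h(x)$ with an irreducible factor of $f(x)$ over $\mathbb{F}_{q^m}$ and then invoke Theorem \ref{th3} to force $R(h(x))=f(x)$, since both are monic irreducible over $\mathbb{F}_q$ and are not coprime. The only cosmetic difference is that you obtain $\deg h(x)=n/u$ (and hence $k(h)=u$ by comparing degrees) from a direct compositum computation, whereas the paper reads off the number $u$ of irreducible factors directly from \cite[Chapter 3, Theorem 3.46]{ln}.
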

\begin{proof}
By \cite[Chapter 3, Theorem
 3.46]{ln}, the canonical factorization of $f(x)$
into monic irreducibles over $\mathbb{F}_{q^m}$ is given by
\[
f(x)=f_1(x)f_2(x)\cdots f_u(x)
\] where $f_1(x),f_2(x),\ldots,f_u(x)\in \mathbb{F}_{q^m}[x]$ are
distinct irreducible polynomials with the same degree. Let
$h(x)=f_1(x)$. By Theorem \ref{th3}, $R(h(x))$ is an irreducible
polynomial in $\mathbb{F}_{q}[x]$. Since $f(x)$ and $R(h(x))$ have a
common factor $h(x)$ in $\mathbb{F}_{q^m}[x]$, $f(x)$ and $R(h(x))$
are not relatively prime in $\mathbb{F}_{q}[x]$. Note that $f(x)$
and $R(h(x))$ are monic irreducible polynomials in
$\mathbb{F}_{q}[x]$. So, $f(x)=R(h(x))$. By Lemma \ref{lemma4},
$h(x),\sigma(h(x)),\ldots ,\sigma^{(k(h)-1)}(h(x))$ are all
irreducible polynomials over $\mathbb{F}_{q^m}$. Therefore, the
canonical factorization of $f(x)$ into monic irreducibles over
$\mathbb{F}_{q^m}$ is given by
\[
f(x)=h(x)\sigma(h(x))\cdots \sigma^{(k(h)-1)}(h(x))
\] and $k(h)=u$.
\end{proof}

In certain sense, Theorem \ref{th4} could be considered as a converse
procedure of Theorem \ref{th3}.

\section{Minimal Polynomials over $\mathbb{F}_{q}$ and $\mathbb{F}_{q^m}$}
\label{mp}

Now we determine the minimal polynomial and linear complexity over
$\mathbb{F}_q$ of a linear recurring sequence $\mathcal{S}$ over
$\mathbb{F}_{q^m}$ with minimal polynomial $h(x)\in
\mathbb{F}_{q^m}[x]$.
\begin{Theorem}
\label{th5} Let $\mathcal{S}$ be a linear recurring sequence over
$\mathbb{F}_{q^m}$ with minimal polynomial $h(x)\in
\mathbb{F}_{q^m}[x]$. Assume that the canonical factorization of
$h(x)$ in $\mathbb{F}_{q^m}[x]$ is given by
\[
h(x)=\prod_{j=1}^{l}P_{j0}^{e_{j0}}P_{j1}^{e_{j1}}\cdots
P_{ji_j}^{e_{ji_j}}
\]
where $\{P_{uv}\}$ are distinct monic irreducible polynomials in
$\mathbb{F}_{q^m}[x]$, $P_{j0},P_{j1},\ldots, P_{ji_j}$ are in the
same $\sigma$-equivalence class and $P_{uv}$, $P_{tw}$ are in the
different $\sigma$-equivalence classes when $u\neq t$. Then the
minimal polynomial over $\mathbb{F}_q$ of $\mathcal{S}$ is given by
\[
H(x)=\prod_{j=1}^{l}R(P_{j0})^{e_j}
\]
where $e_j=\max\{e_{j0},e_{j1},\ldots,e_{ji_j}\}$ for $1\leq j\leq
l$.
\end{Theorem}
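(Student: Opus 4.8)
The plan is to use the standard description of the minimal polynomial over $\mathbb{F}_q$: a monic $g(x)\in\mathbb{F}_{q^m}[x]$ is a characteristic polynomial of $\mathcal{S}$ if and only if $h(x)\mid g(x)$ in $\mathbb{F}_{q^m}[x]$, and a characteristic polynomial over $\mathbb{F}_q$ is simply such a $g(x)$ that happens to lie in $\mathbb{F}_q[x]$. Hence the minimal polynomial $H^{*}(x)$ over $\mathbb{F}_q$ of $\mathcal{S}$ is the monic polynomial of least degree in the ideal $I$ of $\mathbb{F}_q[x]$ consisting of those $g(x)\in\mathbb{F}_q[x]$ with $h(x)\mid g(x)$ in $\mathbb{F}_{q^m}[x]$, and consequently $H^{*}(x)\mid g(x)$ for every $g(x)\in I$. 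I will show that $H(x)=\prod_{j=1}^{l}R(P_{j0})^{e_j}$ lies in $I$ and divides every element of $I$, which forces $H^{*}(x)=H(x)$. (Equivalently one can check $H(x)=\mbox{lcm}\bigl(h(x),\sigma(h(x)),\ldots,\sigma^{(m-1)}(h(x))\bigr)$, the right-hand side being visibly fixed by $\sigma$ and hence lying in $\mathbb{F}_q[x]$; or one may use the decomposition of $\mathcal{S}$ provided by Lemmas \ref{lemma1} and \ref{lemma2} to reduce to a single $\sigma$-equivalence class.)

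First I would record the structural facts about the operator $R$. By Lemma \ref{lemma4} and the definition of $k(\cdot)$, for an irreducible $P$ over $\mathbb{F}_{q^m}$ the polynomials $P,\sigma(P),\ldots,\sigma^{(k(P)-1)}(P)$ are exactly the distinct members of the $\sigma$-equivalence class of $P$, each irreducible over $\mathbb{F}_{q^m}$; hence $R(P)$ is squarefree and its canonical factorization over $\mathbb{F}_{q^m}$ is precisely this class. Moreover $\sigma$-equivalent polynomials share the same value of $k(\cdot)$ (apply a suitable power of $\sigma$ and invoke Lemma \ref{lemma3}), so $R(P_{jv})=R(P_{j0})$ for all $v$, and each $P_{jv}$ is one of the factors $\sigma^{(t)}(P_{j0})$ of $R(P_{j0})$. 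Now $H(x)\in\mathbb{F}_q[x]$, since each $R(P_{j0})$ is irreducible over $\mathbb{F}_q$ by Theorem \ref{th3}; and $h(x)\mid H(x)$ in $\mathbb{F}_{q^m}[x]$, because for each fixed $j$ the distinct irreducibles $P_{j0},\ldots,P_{ji_j}$ all divide $R(P_{j0})$ with $e_{jv}\le e_j$, so $\prod_{v=0}^{i_j}P_{jv}^{e_{jv}}\mid R(P_{j0})^{e_j}$, while for $j\ne j'$ the polynomials $R(P_{j0})$ and $R(P_{j'0})$ involve disjoint $\sigma$-classes and are therefore coprime. Thus $H(x)\in I$, and hence $H^{*}(x)\mid H(x)$.

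For the reverse divisibility, take any $G(x)\in I$; since $G(x)\in\mathbb{F}_q[x]$ we have $\sigma^{(t)}(G(x))=G(x)$ for every $t$. Fix $j$ and pick $v$ with $e_{jv}=e_j$. From $h(x)\mid G(x)$ we get $P_{jv}^{e_j}\mid G(x)$ in $\mathbb{F}_{q^m}[x]$, and applying the ring automorphism $\sigma^{(t)}$ yields $\sigma^{(t)}(P_{jv})^{e_j}\mid\sigma^{(t)}(G(x))=G(x)$ for every $t$. As $t$ runs through $0,1,\ldots,k(P_{j0})-1$, the polynomials $\sigma^{(t)}(P_{jv})$ run through all the (pairwise coprime) factors of $R(P_{j0})$, whence $R(P_{j0})^{e_j}$ divides $G(x)$. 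Since the $R(P_{j0})$, $1\le j\le l$, are pairwise coprime, $H(x)=\prod_{j=1}^{l}R(P_{j0})^{e_j}$ divides $G(x)$ in $\mathbb{F}_{q^m}[x]$; and because $H(x)$ is monic with coefficients in $\mathbb{F}_q$ while $G(x)\in\mathbb{F}_q[x]$, the quotient lies in $\mathbb{F}_q[x]$, so $H(x)\mid G(x)$ in $\mathbb{F}_q[x]$. Taking $G(x)=H^{*}(x)\in I$ gives $H(x)\mid H^{*}(x)$, and together with $H^{*}(x)\mid H(x)$ and monicity we conclude $H^{*}(x)=H(x)$.

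The work here is almost entirely bookkeeping: the delicate points are that the exponent attached to each $\sigma$-class in $H(x)$ must be exactly $e_j=\max_v e_{jv}$ (and not, say, a sum), and that the $\sigma$-orbit of any $P_{jv}$ coincides, with the correct multiplicities, with the set of irreducible factors of $R(P_{j0})$ — this is precisely where Theorem \ref{th3} and the period properties of $\sigma$ from Lemmas \ref{lemma3} and \ref{lemma4} are used. The only other thing to watch is the routine descent of divisibility from $\mathbb{F}_{q^m}[x]$ to $\mathbb{F}_q[x]$, which is immediate once one factor is monic over $\mathbb{F}_q$.
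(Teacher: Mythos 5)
Your proof is correct, but it follows a genuinely different route from the paper's. The paper decomposes the sequence itself: by Lemma \ref{lemma2} it writes $\mathcal{S}=\mathcal{S}_1+\cdots+\mathcal{S}_l$ with one summand per $\sigma$-equivalence class, pins down the minimal polynomial over $\mathbb{F}_q$ of each $\mathcal{S}_j$ as $R(P_{j0})^{e_j}$ (using the irreducibility of $R(P_{j0})$ over $\mathbb{F}_q$ from Theorem \ref{th3} to force the shape $R(P_{j0})^{e'_j}$, and a divisibility-over-$\mathbb{F}_{q^m}$ argument to force $e'_j=e_j$), and then reassembles with Lemma \ref{lemma1}. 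You instead never touch the sequence: you characterize the minimal polynomial over $\mathbb{F}_q$ as the monic generator of the ideal $\mathbb{F}_q[x]\cap h(x)\mathbb{F}_{q^m}[x]$ and compute that generator by Galois descent, applying $\sigma^{(t)}$ to the divisibility $P_{jv}^{e_j}\mid G$ and using that $G$ is $\sigma$-fixed. This is a clean piece of commutative algebra that bypasses Lemmas \ref{lemma1} and \ref{lemma2} (hence Theorems \ref{th1} and \ref{th2} of Lidl--Niederreiter) entirely, and it uses Theorem \ref{th3} only to see that $R(P_{j0})\in\mathbb{F}_q[x]$ --- which in fact already follows from $\sigma(R(P_{j0}))=R(P_{j0})$, so your argument needs strictly less machinery. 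The price is that you must invoke the standard fact that the characteristic polynomials over $\mathbb{F}_{q^m}$ are exactly the monic multiples of $h(x)$ (the paper states the ``only if'' direction in the introduction and uses the ``if'' direction implicitly as well, so this is fair). All the delicate points you flag --- that $\sigma$-equivalent irreducibles share the same value of $k(\cdot)$, that the $\sigma$-orbit of $P_{jv}$ is exactly the set of irreducible factors of $R(P_{j0})$, and the descent of divisibility from $\mathbb{F}_{q^m}[x]$ to $\mathbb{F}_q[x]$ --- are handled correctly.
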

\begin{proof}
By Lemma \ref{lemma2}, there uniquely exist sequences
$\mathcal{S}_1,\mathcal{S}_2,\ldots,\mathcal{S}_l$ over
$\mathbb{F}_{q^m}$ such that
\[
\mathcal{S}=\mathcal{S}_1+\mathcal{S}_2+\cdots+\mathcal{S}_l
\] and the minimal polynomial over $\mathbb{F}_{q^m}$ of $\mathcal{S}_j$ is $P_{j0}^{e_{j0}}P_{j1}^{e_{j1}}\cdots
P_{ji_j}^{e_{ji_j}}$ for $1\leq j\leq l$. Let $H_j(x)$ be the
minimal polynomial over $\mathbb{F}_{q}$ of $\mathcal{S}_j$. Since
$P_{j0},P_{j1},\ldots ,P_{ji_j}$ are in the same
$\sigma$-equivalence class, then $R(P_{j0})^{e_j}$ is a multiple of
$P_{j0}^{e_{j0}}P_{j1}^{e_{j1}}\cdots P_{ji_j}^{e_{ji_j}}$. So, by
Theorem \ref{th3}, $R(P_{j0})^{e_j}$ is a characteristic polynomial
over $\mathbb{F}_q$ of $\mathcal{S}_j$. Hence, $H_j(x)$ divides
$R(P_{j0})^{e_j}$ in $\mathbb{F}_q[x]$. Since, by Theorem \ref{th3},
$R(P_{j0})$ is irreducible over $\mathbb{F}_q$, we have
$H_j(x)=R(P_{j0})^{e'_j}$ where $e'_j\leq e_j$. By the definition of
$e_j$, there exists $e_{ju_j}$ such that $e_{ju_j}=e_j$ where $0\leq
u_j\leq i_j$. If $e'_j<e_j$, then $P_{ju_j}^{e_{ju_j}}$ can't divide
$H_j(x)$. However, $H_j(x)$ is a multiple of
$P_{j0}^{e_{j0}}P_{j1}^{e_{j1}}\cdots P_{ji_j}^{e_{ji_j}}$ in
$\mathbb{F}_{q^m}[x]$ since $H_j(x)$ is also a characteristic
polynomial over $\mathbb{F}_{q^m}$ of $\mathcal{S}_j$. This gives a
contradiction. Therefore, $e'_j=e_j$, i.e.,
$H_j(x)=R(P_{j0})^{e_j}$. For any $0\leq u\neq v\leq l$, we claim
that $R(P_{u0})^{e_u}$ and $R(P_{v0})^{e_v}$ are relatively prime.
Suppose on the contrary that there exist $R(P_{u0})^{e_u}$ and
$R(P_{v0})^{e_v}$, where $u\neq v$, which are not relatively prime.
Since $R(P_{u0})$ and $R(P_{v0})$ are monic irreducible polynomials
over $\mathbb{F}_q$, then we have $R(P_{u0})=R(P_{v0})$. Hence,
$P_{u0}$ divides $R(P_{v0})$ in $\mathbb{F}_{q^m}[x]$. By Theorem
\ref{th4}, the canonical factorization of $R(P_{v0})$ in
$\mathbb{F}_{q^m}[x]$ is given by
\[
R(P_{v0})=P_{v0}\sigma(P_{v0})\cdots\sigma^{(k(P_{v0})-1)}(P_{v0}).
\]
Since $P_{u0}$ is irreducible over $\mathbb{F}_{q^m}$, there exists
a positive integer $j$ such that $P_{u0}=\sigma^{(j)}(P_{v0})$. This
contradicts to the fact that $P_{u0}$ and $P_{v0}$ are in the
different $\sigma$-equivalence classes. Therefore, $R(P_{u0})^{e_u}$
and $R(P_{v0})^{e_v}$ are relatively prime. Then,
$H_1(x),H_2(x),\ldots,H_l(x)$ are pairwise relatively prime. By
Lemma \ref{lemma1}, the minimal polynomial over $\mathbb{F}_{q}$ of
$\mathcal{S}=\sum_{j=1}^{l}\mathcal{S}_j$ is the product of
$H_1(x),H_2(x),\ldots, H_l(x)$. Therefore, we have
\[
H(x)=\prod_{j=1}^{l}R(P_{j0})^{e_j}
\] which completes the proof.
\end{proof}

\begin{Corollary}
\label{cor1} Under the notation of Theorem \ref{th5}, the linear
complexity over $\mathbb{F}_q$ of $\mathcal{S}$ is given by
\[
L_{\mathbb{F}_q}(\mathcal{S})=\sum_{j=1}^{l}e_jk(P_{j0})\deg(P_{j0})
\] where $k(f)$ is defined in Section \ref{pra}.
\end{Corollary}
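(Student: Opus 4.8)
The plan is to read off the linear complexity directly from the description of the minimal polynomial obtained in Theorem \ref{th5}. By definition, $L_{\mathbb{F}_q}(\mathcal{S})$ is the degree of the minimal polynomial over $\mathbb{F}_q$ of $\mathcal{S}$, which Theorem \ref{th5} identifies as $H(x)=\prod_{j=1}^{l}R(P_{j0})^{e_j}$. Hence $L_{\mathbb{F}_q}(\mathcal{S})=\deg H(x)=\sum_{j=1}^{l}e_j\deg\bigl(R(P_{j0})\bigr)$, and it only remains to compute $\deg\bigl(R(P_{j0})\bigr)$ for each $j$.

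For this I would use the defining factorization $R(P_{j0})=P_{j0}\,\sigma(P_{j0})\cdots\sigma^{(k(P_{j0})-1)}(P_{j0})$ together with the observation that the automorphism $\sigma$ of $\mathbb{F}_{q^m}[x]$ acts only on the coefficients (raising each to its $q$th power) and therefore preserves the degree of every polynomial: $\deg\sigma^{(i)}(f(x))=\deg f(x)$ for all $i$. Consequently each of the $k(P_{j0})$ factors appearing in $R(P_{j0})$ has degree $\deg(P_{j0})$, so $\deg\bigl(R(P_{j0})\bigr)=k(P_{j0})\deg(P_{j0})$. Substituting this into the sum yields $L_{\mathbb{F}_q}(\mathcal{S})=\sum_{j=1}^{l}e_j\,k(P_{j0})\,\deg(P_{j0})$, as claimed.

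There is essentially no obstacle here: the content of the corollary is entirely carried by Theorem \ref{th5}, and what is left is a one-line degree count, the only nontrivial point being the (trivial) fact that $\sigma$ is degree-preserving. If one wanted a slightly more self-contained argument one could instead invoke Theorem \ref{th4} to note that $R(P_{j0})$ is the monic irreducible polynomial over $\mathbb{F}_q$ whose canonical factorization over $\mathbb{F}_{q^m}$ consists of the $k(P_{j0})$ conjugates $\sigma^{(i)}(P_{j0})$, each of degree $\deg(P_{j0})$; either way the degree equals $k(P_{j0})\deg(P_{j0})$.
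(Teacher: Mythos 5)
Your argument is correct and is exactly the (implicit) derivation the paper intends: the corollary is stated without proof immediately after Theorem \ref{th5}, and the degree count $\deg R(P_{j0})=k(P_{j0})\deg(P_{j0})$, justified by the fact that $\sigma$ preserves degrees, is all that is needed.
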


Using Theorem \ref{th5}, we could also give a refinement of
\cite[Proposition 2.1]{mo2}:
\begin{Theorem}
\label{th6} Let $f(x)$ be a polynomial over $\mathbb{F}_{q}$ with
$\deg(f)\geq 1$. Suppose that
\begin{eqnarray}
\label{mf1} f=r_1^{e_1}r_2^{e_2}\cdots r_l^{e_l},\mbox{~~~}e_1, e_2,
\ldots, e_l>0
\end{eqnarray}
is the canonical factorization of $f$ into monic irreducibles over
$\mathbb{F}_{q}$. Denote $n_i=\deg(r_i)$. Suppose by Theorem
\ref{th4} that the canonical factorization of $r_i(x)$ into monic
irreducibles over $\mathbb{F}_{q^m}$ is given by
\begin{eqnarray}
\label{mf2} r_i(x)=P_{i}(x)\sigma^{(1)}(P_{i}(x))\cdots
\sigma^{(u_i-1)}(P_{i}(x))
\end{eqnarray}
where $u_i= \gcd(n_i,m)=k(P_{i}(x))$. Let $\mathcal{S}$ be a linear
recurring sequence over $\mathbb{F}_{q^m}$. Then, the minimal
polynomial over $\mathbb{F}_{q}$ of $\mathcal{S}$ is $f(x)$ if and
only if the minimal polynomial $h(x)$ over $\mathbb{F}_{q^m}$ of
$\mathcal{S}$ is of the following form:
\begin{equation}
\label{mf3}
h(x)=\prod_{i=1}^{l}P_{i}^{e_{i0}}\sigma^{(1)}(P_{i})^{e_{i1}}\cdots
\sigma^{({u_i-1})}(P_{i})^{e_{iu_i-1}}
\end{equation}
where $0\leq e_{ij}\leq e_i$ and
$\max\{e_{i0},e_{i1},\ldots,e_{iu_i-1}\}=e_i$ for every
$i=1,2,\ldots,l$.
\end{Theorem}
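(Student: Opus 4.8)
The plan is to reduce Theorem~\ref{th6} directly to Theorem~\ref{th5} by matching up the two canonical factorizations. First I would prove the "only if" direction. Suppose the minimal polynomial over $\mathbb{F}_q$ of $\mathcal{S}$ is $f(x)$, and let $h(x)$ be the minimal polynomial over $\mathbb{F}_{q^m}$ of $\mathcal{S}$. Since every monic irreducible factor $P$ of $h(x)$ in $\mathbb{F}_{q^m}[x]$ must divide $f(x)$ (because $f$, being a characteristic polynomial over $\mathbb{F}_q$, is a characteristic polynomial over $\mathbb{F}_{q^m}$ and hence a multiple of $h$), and by~(\ref{mf2}) the irreducible factors of $f$ over $\mathbb{F}_{q^m}$ are exactly the $\sigma^{(t)}(P_i)$, we may write $h(x)$ in the form~(\ref{mf3}) with exponents $0\le e_{ij}\le e_i$ (the upper bound coming from $h\mid f$). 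Note that $P_i,\sigma(P_i),\dots,\sigma^{(u_i-1)}(P_i)$ all lie in one $\sigma$-equivalence class and $P_i$, $P_t$ for $i\ne t$ lie in different ones — this is exactly the hypothesis structure of Theorem~\ref{th5}, with $R(P_i)=r_i$ by the computation in the proof of Theorem~\ref{th3}. Applying Theorem~\ref{th5} to this $h$ gives that the minimal polynomial over $\mathbb{F}_q$ of $\mathcal{S}$ equals $\prod_{i=1}^{l} R(P_i)^{\max_j e_{ij}} = \prod_{i=1}^{l} r_i^{\max_j e_{ij}}$. Since this must equal $f=\prod r_i^{e_i}$ and the $r_i$ are distinct irreducibles, unique factorization forces $\max\{e_{i0},\dots,e_{iu_i-1}\}=e_i$ for each $i$, which is the asserted form.

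For the "if" direction, I would run the same argument in reverse. Assume $h(x)$ has the form~(\ref{mf3}) with $0\le e_{ij}\le e_i$ and $\max_j e_{ij}=e_i$ for each $i$. Again the factors $\sigma^{(t)}(P_i)$ for fixed $i$ are one $\sigma$-class and different $i$ give different classes, so Theorem~\ref{th5} applies directly and yields that the minimal polynomial over $\mathbb{F}_q$ of $\mathcal{S}$ is $\prod_{i=1}^{l} R(P_i)^{\max_j e_{ij}} = \prod_{i=1}^{l} r_i^{e_i} = f(x)$. This closes the equivalence.

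The genuinely load-bearing observations — rather than "obstacles" — are: (i) identifying $R(P_i)$ with $r_i$, which follows because $r_i$ is monic irreducible over $\mathbb{F}_q$ and shares the factor $P_i$ with the monic irreducible $R(P_i)$, hence they coincide (the argument already used inside the proof of Theorem~\ref{th4}); (ii) checking that the grouping of the $P_{uv}$ in~(\ref{mf3}) into $\sigma$-equivalence classes is consistent with the hypotheses of Theorem~\ref{th5}, i.e. that two factors $\sigma^{(s)}(P_i)$ and $\sigma^{(t)}(P_j)$ are $\sigma$-equivalent precisely when $i=j$ — the "only when" part uses that the $r_i$ are distinct irreducibles over $\mathbb{F}_q$, since a $\sigma$-equivalence between factors of $r_i$ and $r_j$ would force $R(P_i)=R(P_j)$, i.e. $r_i=r_j$; and (iii) the uniqueness of factorization step that converts the polynomial identity $\prod r_i^{\max_j e_{ij}}=\prod r_i^{e_i}$ into the exponent condition. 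None of these is deep, so the main subtlety is purely bookkeeping: being careful that the exponent pattern in~(\ref{mf3}) is allowed to have some $e_{ij}$ strictly smaller than $e_i$ (or even zero), and that only the maximum over each $\sigma$-orbit is pinned down — exactly the phenomenon Theorem~\ref{th5} captures through the quantities $e_j=\max\{e_{j0},\dots,e_{ji_j}\}$.
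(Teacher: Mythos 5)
Your argument is correct and follows essentially the same route as the paper: reduce both directions to Theorem~\ref{th5} via the identification $R(P_i)=r_i$, use $h\mid f$ in $\mathbb{F}_{q^m}[x]$ to get the shape of $h$, and use uniqueness of the minimal polynomial (equivalently unique factorization over $\mathbb{F}_q$) to pin down $\max_j e_{ij}=e_i$. Your explicit verification that distinct $r_i$ give rise to distinct $\sigma$-equivalence classes is a detail the paper leaves implicit, but the substance of the proof is the same.
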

\begin{proof}
It follows from Theorem \ref{th5} that the minimal polynomial over
$\mathbb{F}_{q}$ of $\mathcal{S}$ is $f(x)$ if the minimal
polynomial $h(x)$ over $\mathbb{F}_{q^m}$ of $\mathcal{S}$ is given
by (\ref{mf3}).

Conversely, suppose that the minimal polynomials over
$\mathbb{F}_{q}$ of $\mathcal{S}$ is $f(x)$. Then, $h(x)$ is a
factor of $f(x)$ in $\mathbb{F}_{q^m}[x]$ since $f(x)$ is also a
characteristic polynomial over $\mathbb{F}_{q^m}$ of $\mathcal{S}$.
By (\ref{mf1}) and (\ref{mf2}), the canonical factorization of
$f(x)$ into monic irreducibles over $\mathbb{F}_{q^m}$ is given by
\[
f(x)=\prod_{i=1}^{l}P_{i}^{e_{i}}\sigma^{(1)}(P_{i})^{e_{i}}\cdots
\sigma^{({u_i-1})}(P_{i})^{e_{i}}.
\]
So $h(x)$ must be of the form
\[
h(x)=\prod_{i=1}^{l}P_{i}^{e_{i0}}\sigma^{(1)}(P_{i})^{e_{i1}}\cdots
\sigma^{({u_i-1})}(P_{i})^{e_{iu_i-1}}
\]
where $0\leq e_{ij}\leq e_i$ for every $i=1,2,\ldots,l$. By Theorem
\ref{th5}, the minimal polynomial over $\mathbb{F}_{q}$ of
$\mathcal{S}$ is given by
\[ H(x)=\prod_{i=1}^{l}R(P_{i})^{e'_i}=\prod_{i=1}^{l}r_i(x)^{e'_i}
\]
where $e'_i=\max\{e_{i0},e_{i1},\ldots,e_{iu_i-1}\}$. Due to the
uniqueness of the minimal polynomial over $\mathbb{F}_{q}$ of
$\mathcal{S}$, we have $H(x)=f(x)$. Hence, $e'_i=e_i$. Therefore,
the minimal polynomial $h(x)$ over $\mathbb{F}_{q^m}$ of
$\mathcal{S}$ is of the form (\ref{mf3}). This completes the proof.
\end{proof}

At the end of this section, we give an example to illustrate Theorem \ref{th5} and Corollary \ref{cor1}.
\begin{Example}
Let $\mathbb{F}_2\subseteq \mathbb{F}_4$ and let $\alpha$ be a root
of $x^2+x+1$ in $\mathbb{F}_4$. So, $\mathbb{F}_4 =\{0,1, \alpha,
1+\alpha \}$. Let $\mathcal{S}$ be a periodic sequence over
$\mathbb{F}_4$ with the least period $15$. The first period terms of
$\mathcal{S}$ are given by
\[
\alpha^2,\alpha,\alpha,\alpha^2,\alpha^2,\alpha^2,0,\alpha,\alpha^2,\alpha,0,\alpha,0,0,1.
\]
The minimal polynomial over $\mathbb{F}_{4}$ of $\mathcal{S}$ is
$x^3+\alpha^2x^2+\alpha^2$. We first factor
$x^3+\alpha^2x^2+\alpha^2$ into irreducible polynomials over
$\mathbb{F}_4$:
\[
x^3+\alpha^2x^2+\alpha^2=(x+\alpha)(x^2+x+\alpha).
\]
Note that
\[
\sigma(x+\alpha)=x+\alpha^2,\;\; \sigma^{(2)}(x+\alpha)=x+\alpha,
\]
\[
\sigma(x^2+x+\alpha)=x^2+x+\alpha^2, \;\;
\sigma^{(2)}(x^2+x+\alpha)=x^2+x+\alpha.
\]
So we have
\[k(x+\alpha)=2, \;\; k(x^2+x+\alpha)=2.\]
Then, by Theorem \ref{th5} and Corollary \ref{cor1}, the minimal polynomial over $\mathbb{F}_{2}$ of $\mathcal{S}$ is
\begin{eqnarray}
&& (x+\alpha)\sigma(x+\alpha)(x^2+x+\alpha)\sigma(x^2+x+\alpha) \\
&=&(x^2+x+1)(x^4+x+1)=x^6+x^5+x^4+x^3+1
\end{eqnarray}
and the linear complexity over $\mathbb{F}_{2}$ of $\mathcal{S}$ is
\[
L=1\times k(x+\alpha)\times \deg(x+\alpha)+1\times
k(x^2+x+\alpha)\times \deg(x^2+x+\alpha)=2+2\times2=6.
\]
\end{Example}

\section{Remarks on the Lower Bound of
Meidl and \"Ozbudak }
\label{lbmo}

Meidl and \"Ozbudak \cite{mo1} derived a lower bound on the linear
complexity over $\mathbb{F}_{q^m}$ of a linear recurring sequence
$\mathcal{S}$ over $\mathbb{F}_{q^m}$ with given minimal polynomial
$g(x)$ over $\mathbb{F}_q$. In this section, using Theorem \ref{th6}
we give a new proof for the lower bound of Meidl and \"Ozbudak and
show that this lower bound is tight if and only if the minimal
polynomial over $\mathbb{F}_{q^m}$ of $\mathcal{S}$ is in certain
form.

\begin{Corollary}
\label{cor2} Let $f(x)$ be a monic polynomial in $\mathbb{F}_{q}[x]$
with the canonical factorization into irreducible polynomials over
$\mathbb{F}_{q}$ given by
\begin{eqnarray}
\label{lb1} f=r_1^{e_1}r_2^{e_2}\ldots r_k^{e_k}, \;\;\; e_1, e_2,
\ldots, e_k>0.
\end{eqnarray}
Suppose that $\mathcal{S}$ is a linear recurring sequence over
$\mathbb{F}_{q^m}$ and the minimal polynomial over $\mathbb{F}_{q}$
of $\mathcal{S}$ is $f(x)$. Then, the linear complexity
$L_{\mathbb{F}_{q^m}}(\mathcal{S})$ over $\mathbb{F}_{q^m}$ of
$\mathcal{S}$ is lower bounded by
\begin{eqnarray}
\label{lb2} L_{\mathbb{F}_{q^m}}(\mathcal{S})\geq
 \sum_{i=1}^{k}e_i\frac{n_i}{\gcd(n_i,m)}
\end{eqnarray}
where $n_i=\deg(r_i)$ for $i=1,2,\ldots, k$. Furthermore, suppose by
Theorem \ref{th4} that the canonical factorization of $r_i(x)$ into
monic irreducibles over $\mathbb{F}_{q^m}$ is given by
\begin{eqnarray}
\label{lb3} r_i(x)=P_{i}(x)\sigma^{(1)}(P_{i}(x))\ldots
\sigma^{(u_i-1)}(P_{i}(x))
\end{eqnarray}
where $u_i= \gcd(n_i,m)$ for $i=1,2,\ldots, k$. Then, the lower
bound is tight if and only if the minimal polynomial $h(x)$ over
$\mathbb{F}_{q^m}$ of $\mathcal{S}$ is of the following form:
\[ h(x)=
 \prod _{i=1}^{k} \sigma^{(j_i)}(P_{i})^{e_i}
\]
where $0\leq j_i\leq u_i -1$ for $i=1,2,\ldots, k$.
\end{Corollary}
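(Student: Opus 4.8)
The plan is to read everything off Theorem \ref{th6}, which already classifies the possible shapes of the minimal polynomial $h(x)$ over $\mathbb{F}_{q^m}$. First I would record a degree bookkeeping fact: by Theorem \ref{th4} the factorization (\ref{lb3}) expresses $r_i(x)$ as a product of $u_i=\gcd(n_i,m)$ monic irreducibles over $\mathbb{F}_{q^m}$, all of the same degree, so $\deg(P_i)=n_i/u_i=n_i/\gcd(n_i,m)$.

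Next I would invoke Theorem \ref{th6}. Since the minimal polynomial over $\mathbb{F}_q$ of $\mathcal{S}$ is $f(x)$, that theorem forces
\[
h(x)=\prod_{i=1}^{k}P_i^{e_{i0}}\sigma^{(1)}(P_i)^{e_{i1}}\cdots\sigma^{(u_i-1)}(P_i)^{e_{iu_i-1}},
\]
with $0\le e_{ij}\le e_i$ and $\max\{e_{i0},\ldots,e_{iu_i-1}\}=e_i$ for each $i$. Taking degrees and using $\deg(P_i)=n_i/\gcd(n_i,m)$ gives
\[
L_{\mathbb{F}_{q^m}}(\mathcal{S})=\deg(h(x))=\sum_{i=1}^{k}\frac{n_i}{\gcd(n_i,m)}\sum_{j=0}^{u_i-1}e_{ij}.
\]
Because the $e_{ij}$ are nonnegative and at least one of them (for fixed $i$) equals $e_i$, we have $\sum_{j=0}^{u_i-1}e_{ij}\ge e_i$, which yields the lower bound (\ref{lb2}).

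For the equality statement: (\ref{lb2}) is an equality exactly when $\sum_{j=0}^{u_i-1}e_{ij}=e_i$ for every $i$. Since the $e_{ij}$ are nonnegative with maximum $e_i$, this happens iff, for each $i$, exactly one exponent $e_{ij_i}$ equals $e_i$ and all the others vanish, i.e.\ $h(x)=\prod_{i=1}^{k}\sigma^{(j_i)}(P_i)^{e_i}$ with $0\le j_i\le u_i-1$. Conversely, any $h(x)$ of this form has exponents equal to $e_i$ or $0$, hence satisfies the hypotheses of Theorem \ref{th6}, so it is genuinely the minimal polynomial over $\mathbb{F}_{q^m}$ of some sequence with minimal polynomial $f(x)$ over $\mathbb{F}_q$, and its degree $\sum_i e_i\deg(P_i)=\sum_i e_i n_i/\gcd(n_i,m)$ attains the bound. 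The only step that needs any care is this equality analysis — the rest is Theorem \ref{th6} plus the trivial fact that a sum of nonnegative numbers is at least its maximum; there is no real obstacle.
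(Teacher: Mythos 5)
Your proposal is correct and follows essentially the same route as the paper: apply Theorem \ref{th6} to pin down the shape of $h(x)$, use $\deg(P_i)=n_i/\gcd(n_i,m)$ from (\ref{lb3}), and compare $\deg(h)$ with the bound. The only difference is that you spell out the elementary equality analysis (a sum of nonnegative integers with maximum $e_i$ equals $e_i$ iff exactly one term is $e_i$ and the rest vanish), which the paper leaves implicit.
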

\begin{proof}
It follows from (\ref{lb1}) and (\ref{lb3}) and Theorem \ref{th6}
that the minimal polynomial $h(x)$ over $\mathbb{F}_{q^m}$ of
$\mathcal{S}$ is of the form:
\begin{equation}
\label{lb4}
h(x)=\prod_{i=1}^{k}P_{i}^{e_{i0}}\sigma^{(1)}(P_{i})^{e_{i1}}\cdots
\sigma^{({u_i-1})}(P_{i})^{e_{iu_i-1}}
\end{equation}
where $0\leq e_{ij}\leq e_i$ and
$\max\{e_{i0},e_{i1},\ldots,e_{iu_i-1}\}=e_i$ for every
$i=1,2,\ldots,k$. Note from (\ref{lb3}) that
$\deg(P_{i}(x))=n_i/u_i$. Hence, by (\ref{lb4}),
\begin{eqnarray*}
L_{\mathbb{F}_{q^m}}(\mathcal{S})=\deg(h(x)) \geq \sum_{i=1}^{k}e_i
\deg(P_{i}(x)) =
 \sum_{i=1}^{k}e_i\frac{n_i}{ \mbox{gcd}(n_i,m)}
\end{eqnarray*}
and the equality holds if and only if
\[ h(x)=
 \prod _{i=1}^{k} \sigma^{(j_i)}(P_{i})^{e_i}
\]
where $0\leq j_i\leq u_i -1$ for $i=1,2,\ldots, k$. This completes
the proof.
\end{proof}

\begin{Remark}
Meidl and \"Ozbudak \cite[Proposition 3]{mo1} showed that there
exists a linear recurring sequence over $\mathbb{F}_{q^m}$ such that
the lower bound (\ref{lb2}) is tight. We give in Corollary
\ref{cor2} the necessary and sufficient condition under which the
lower bound (\ref{lb2}) is tight.
\end{Remark}

\baselineskip=14pt\small

\end{document}